\definecolor{hypercolor}{RGB}{9,80,162}
\def\cA{{\mathcal A}}
\def\cC{{\mathcal C}}
\def\cF{{\mathcal F}}
\def\cH{{\mathcal H}}
\def\cM{{\mathcal M}}
\def\cR{{\mathcal R}}
\def\cS{{\mathcal S}}
\def\cV{{\mathcal V}}
\def\cW{{\mathcal W}}
\def\beq{\begin{eqnarray}}
\def\eeq{\end{eqnarray}}
\def\pa{\partial}
\def\at{\left(}               		   %%  open (
\def\aq{\left[}               		   %%  open [
\def\ct{\right)}              		   %%  close )
\def\cq{\right]}              		   %%  close ]
\def\d{\mathrm{d}}
\def\e{\mathrm{e}}
\newtheorem{theorem}{Theorem}[section]
\theoremstyle{definition}
\newcommand{\wick}[1]{{:}#1{:}}
\newcommand{\eq}{eq.\@ }
\newcommand{\eqs}{eqs.\@ }
\newcommand{\Eq}{Eq.\@ }
\newcommand{\expvalom}[1]{\expval{\wick{#1}}_\omega}
\newcommand\eqH{\stackrel{\mbox{\tiny $\cH$}}{=}}
\newcommand\leqH{\stackrel{\mbox{\tiny $\cH$}}{<}}
\newcommand\geqH{\stackrel{\mbox{\tiny $\cH$}}{>}}
\newcommand{\enquoteit}[1]{``\textit{#1}''}
\begin{document}
	
%%%%%%%%%%%%%%%%%	Title 
	
\par 
	\bigskip 
	\LARGE 
	\noindent 
	{\bf Evaporation of four-dimensional dynamical black holes sourced by the quantum trace anomaly} 
	\bigskip 
	\par 
	\rm 
	\normalsize

%%%%%%%%%%%%%%%%%	Authors

\large
\noindent
{\bf Paolo Meda$^{1,3,a}$}, {\bf Nicola Pinamonti$^{2,3,b}$}, {\bf Simone Roncallo$^{1,3,c}$}, {\bf Nino Zangh\`\i$^{1,3,d}$}  \\
\par

\small
\noindent$^1$ \textit{Dipartimento di Fisica, Universit\`a di Genova, Italy}.

\smallskip

\noindent$^2$ \textit{Dipartimento di Matematica, Universit\`a di Genova, Italy}.

\smallskip

\noindent$^3$ \textit{Istituto Nazionale di Fisica Nucleare - Sezione di Genova, Italy}.

\smallskip

\newcommand{\myat}{{\fontfamily{pag}\selectfont @}}

\noindent E-mail:
\href{paolo.meda@ge.infn.it}{$^a$paolo.meda\myat ge.infn.it},
\href{pinamont@dima.unige.it}{$^b$pinamont\myat dima.unige.it}
\href{simoneroncallo1996@gmail.com}{$^c$simoneroncallo1996\myat gmail.com}, \\
\href{nino.zanghi@ge.infn.it}{$^d$nino.zanghi\myat ge.infn.it},

\normalsize

\par

\rm\normalsize

\par 
\bigskip

\rm\normalsize 

\bigskip

\noindent 
\small 
{\bf Abstract.}
We study the evaporation of a four-dimensional spherically symmetric black hole formed  in a gravitational collapse. We analyze the back-reaction of a massless quantum scalar field conformally coupled to the scalar curvature by means of the semiclassical Einstein equations. We show that the evaporation is linked to an ingoing negative energy flux at the dynamical horizon and that this flux is induced by the quantum matter trace anomaly outside the black hole horizon whenever a suitable averaged energy condition is satisfied. For illustrative purposes, we evaluate the negative ingoing flux and the corresponding rate of evaporation in the case of a null radiating star described by the Vaidya spacetime.

\bigskip

\section{Introduction}
\label{sec:intro}
In semiclassical approximation of quantum gravity, matter is described by quantum fields propagating on a curved classical background in such a way that, given a state  $\omega$ of  the quantum field, the back-reaction on the classical background is determined by the semiclassical Einstein equations
\begin{equation}
\label{eq:SCE}
	G_{\mu\nu} = 8\pi \expvalom{T_{\mu\nu}}
\end{equation}
in units convention $G = c = \hbar = 1$. Here $G_{\mu\nu}$ is the usual Einstein tensor and $\expvalom{T_{\mu\nu}}$ is the mean expectation value of the quantum stress-energy tensor in the state $\omega$. The classical background is assumed to be a four-dimensional globally hyperbolic spacetime $(\mathcal{M},g)$, with $\mathcal{M}$ a smooth manifold and $g$ a Lorentzian metric with signature $(-,+,+,+)$. We consider self-consistent solutions in semiclassical gravity. A self-consistent solution is a pair composed by a spacetime metric and a quantum state satisfying \eq \eqref{eq:SCE} at all orders in $\hbar$. Given any such solution, the quantum stress-energy evaluated in $\omega$ gives the Einstein tensor of the spacetime $(\mathcal{M},g)$. Conversely, the Einstein tensor constructed from the metric $g$ yields the expectation value of the quantum stress-energy tensor of the quantum field consistent with $\omega$. It is often argued that solutions of \eq \eqref{eq:SCE} furnish approximations to models of a more fundamental theory of quantum gravity valid in the regime where $R_{\mu\nu\rho\sigma}R^{\mu\nu\rho\sigma} \ll m_P^4$, with $m_P$ denoting the Planck mass, and when the fluctuations of the stress-energy tensor are small \cite{Kuo:1993if}.

In the seminal works \cite{Hawking1974black,Hawking1975}, Hawking showed that a Schwarzschild black hole emits a radiation which can be detected as a flux of particles at large distances from the black hole. This is the well known Hawking radiation, originally obtained assuming no back-reaction and with the quantum matter being in a vacuum state in the asymptotic past. Its power spectrum is thermal with temperature $T_{H}$, the so-called Hawking temperature, given in terms of the black hole mass $M$ by
\begin{equation}
\label{temp}
T_{H} = \frac{1}{8\pi k_B M}
\end{equation}
(in our units convention), with  $k_B$ being the Boltzmann constant. Under these assumptions, the power of the radiation at future infinity is regarded as describing the evaporation of the black hole. In the adiabatic approximation, the value of the total power emitted is equated to the rate of the loss of mass of the black hole, which turns out to be constant and proportional to $M^{-2}$ (see also \cite{Page:1976df}). 

{This standard derivation is defective in various respects. On one hand, the adiabatic approximation is not able to predict the precise form of the horizon of a black hole with non constant mass.  Moreover, it misses the dynamical nature of the process of evaporation. It is important to stress that these defects might be cured, at least in principle, within the framework  of the semiclassical approximation. Indeed, according to \eq \eqref{eq:SCE}, the rate of mass loss  turns out to be proportional to a flux of negative energy across the black hole horizon. Though in the absence of the static symmetry the power radiated at infinity is not directly linked to that negative flux of energy at the horizon (and its precise value  depends sensibly  on the spacetime details), this radiation (if any) can be detected at future infinity. On the contrary, we shall see that the flux of negative energy at the horizon is constrained and forced to be present in the model by the quantum nature of matter in the causal past of the horizon. This is the simple idea we shall exploit to show evaporation.
	
Preliminarily, we observe that concepts like event horizons and its adiabatic changes  are based on global properties of the spacetime. So, in the case of dynamical backgrounds, they need to be replaced by apparent horizons and their evolutions. Indeed, some attempts to study the back-reaction on four-dimensional spherically symmetric black holes have been made, e.g., assuming the geometrical optics approximation, which simplifies the {analysis} to a two-dimensional problem \cite{balbinot:1984hawking,balbinot:1989backreaction}.
	
In our work we rely on the semiclassical Einstein equations \eqref{eq:SCE}, without making further approximations. Without any appeal to global properties or the explicit form of the stress-energy tensor on the horizon, we shall perform a  local analysis of the apparent horizons and show that, in the case of spherically symmetric spacetimes, their dynamics can be constrained by the matter content outside and in the causal past of the black hole. In particular, by considering the stress-energy tensor of a massless, conformally coupled scalar field, and its conservation laws, we shall prove that the evaporation of the black hole is induced by the form of the quantum trace anomaly outside the horizon.
	
This follows from a very natural condition of the quantum state in the causal past and a mild assumption on the energy outside the horizon, which holds in the case of classical matter. The only quantum property of matter which is used in the argument is the form of the trace anomaly, without which no evaporation can occur. We shall now provide some background and a sketch of our analysis.
	
Firstly, we recall that the process of evaporation can be semiclassically explained by the presence of an ingoing flux of negative energy on the horizon, violating the classical null energy condition $T(k,k) \geq 0$ for all null vectors $k$ on the horizon. This flux is responsible of the shrink of the dynamical area of the horizon, which can in general increase or decrease respectively during a formation or evaporation process \cite{Ashtekar:2002ag,Ashtekar:2003hk,Ashtekar:2004cn}. Indeed, such a violation on the horizon is not surprising, since it has been proven that a very large class of pointwise energy conditions are not valid when quantum fields are involved, even in flat spacetime (the Casimir effect, for instance)}.\footnote{See, e.g., \cite{Ford:1996er,Flanagan:1997gn} and the references therein. For instance, it is known that the emission of Hawking radiation and thus the black hole evaporation can be ascribed to the presence of an anomalous trace  in the stress-energy tensor of a quantum matter field near the horizon \cite{Christensen:1977jc,davies1976energy,fulling1976radiation,Bardeen:1981black, Bardeen:2014uaa} (see also \cite{Casadio:2004th} for a discussion about this topic in relation with the AdS-CFT correspondence).}

{Secondly, it is well known that in the framework of quantum fields on curved spacetimes, an anomalous term is present in the expectation value of the trace of the quantum stress-energy tensor in any physically reasonable state. Such an anomalous trace is a local contribution which depends only on the geometry and  the linear equation of motion of the matter field; it arises in any covariant regularization procedure which gives origin to a  covariantly conserved  stress-energy tensor $\wick{T_{\mu\nu}}$, while breaking the classical conformal invariance of $T_{\mu\nu}$ \cite{Wald1977back,Wald:1978pj,birrell1984quantum,Hollands:2004yh,moretti2003comments}. Moreover, although the expectation value of the quantum stress-energy tensor $\expvalom{T_{\mu\nu}}$ is not explicitly available for dynamical black holes (due to the absence of precise control of sufficiently regular states in that context), its trace-anomaly can be evaluated explicitly and independently of the state also for dynamical backgrounds.}\footnote{For some references about the computation of the stress-energy tensor in Schwarzschild, see \cite{Candelas:1980zt,Howard:1984vsc,Anderson:1995sts,Anderson:2020gsq}. For details about the definition of quadratic observables like $T_{\mu\nu}$ and $\phi^2$ as normal ordered fields, see \cite{Wald:1995yp,Hollands:2001nf,Hollands:2001fb,Brunetti:1996mc}. Finally, see \cite{hollands2015quantum} for a general discussion about quantum field theory on curved spacetimes and its applications like Hawking radiation.} 

{Thirdly, black hole evaporation follows from the trace anomaly whenever the stress-energy tensor satisfies a suitable averaged energy condition outside the horizon (see also \cite{Emelyanov:2019stco}). In this extra condition (see below) the pointwise expectation value of a particular component of the stress-energy tensor, is smeared with a suitable strictly positive smooth function supported outside the black hole horizon. Furthermore, this condition needs to hold only for the particular state $\omega$ used in \eq \eqref{eq:SCE} even if it is similar in spirit to other quantum averaged energy conditions, which hold in any quantum state and that have been established in many contexts} \footnote{See, e.g., \cite{Wald:1991xn,Ford:1994bj,Ford:1995gb,Fewster:2002ne,Fewster:2002dp,Brown:2018hym,Fewster:2019bjg,Schlemmer:2008dk,Freivogel:2020hiz} (see also \cite{Kontou:2020bta} for a general review about classical and quantum energy inequalities and further references about this topic).} Finally, we observe that this energy condition can be linked to the geometry described by the semiclassical metric forming with the state $\omega$ a solution of the semiclassical Einstein equations \eqref{eq:SCE}. If we also assume that quantum corrections are negligible outside and in the past of the horizon, we get that the energy condition is then satisfied in known models of gravitational collapse like the Oppenheimer-Snyder and Lema\^itre-Tolman-Bondi models \cite{Griffiths:2009dfa} and it is also compatible with the collapsing matter described by a classical scalar field in the works of Christodoulou \cite{Christodoulou1986:global,Christodoulou1986:self,Christodoulou:1991yfa}.

\medskip

The paper is organized as follows. In \hyperref[sec:spherical]{Section 2} we recollect some geometric aspects of spherically symmetric spacetimes and apparent horizons and we recall Hayward's thermodynamic interpretation of black hole dynamics. In \hyperref[sec:variation-mass]{Section 3} we describe the semiclassical process of evaporation due to the negative ingoing flux on the horizon and provide an equation for the variation of the mass. In \hyperref[sec:trace-anomaly]{Section 4} we show that the quantum trace anomaly of a free massless conformally coupled scalar field drives the evaporation assuming a certain averaged quantum energy inequality, which is also satisfied by the background geometry in most realistic classical models of collapse. As an example, we compute the rate of evaporation in the Vaidya spacetime {and, as a byproduct, we find that the Schwarzschild spacetime cannot be in equilibrium with the back-reaction of any quantum matter field}. \hyperref[sec:concl]{Section 5} contains the conclusions and some possible future developments. The technical details to obtain the equation for the variation of the mass and the proof of Theorem \ref{theo:evap} are collected in the \hyperref[app]{Appendix}.

\section{Spherically symmetric black holes}
\label{sec:spherical}

A spherically symmetric spacetime $(\cM,g)$ is represented by the manifold $\cM = \Gamma \times \mathds{S}^2$, where $\mathds{S}^2$ is the two-dimensional sphere of unital radius and $\Gamma$ is a two-dimensional space normal to $\mathds{S}^2$, and by the metric 
\[
	\d s^2 = g_{\mu\nu} \d x^\mu \d x^\nu = \gamma_{ij} \d x^i \d x^j + r^2\d \theta^2 + r^2\sin^2\theta \d \varphi^2 \ , \qquad \mu,\nu = 0,\dots,3, \quad i,j = 1,2,
\]
where $r^2 \in \cC^{\infty}(\Gamma)$ measures the curvature of each sphere. The two-dimensional spacetime $(\Gamma,\gamma)$ corresponds to the quotient of $\cM$ with respect to the $SO(3)$ group centered at the origin $r=0$. The invariant $\nabla_\mu r \nabla^\mu r$ defines the Misner-Sharp energy
\begin{equation}
\label{eq:Hawking-mass}
	m \doteq \frac{r}{2} \at 1 - \nabla_\mu r \nabla^\mu r \ct ,
\end{equation}
which describes the energy enclosed inside the sphere of radius $r$ (it is a special case of the Hawking mass for the class of spherically symmetric spacetimes) \cite{Misner:1964je, Hawking:1968qt}.

In order to describe an evaporating dynamical black hole, the two-dimensional normal line element $\d \gamma^2$ is often represented in the Bardeen-Vaidya metric \cite{Bardeen:1981black}
\begin{equation}
\label{eq:metric_eddington}
	\d \gamma^2 = -\e^{2\Psi(v,r)} C(v,r) \d v^2 + 2\e^{\Psi(v,r)} \d v \d r, 
\end{equation}
where $v$ is the advanced time and
\begin{equation}
\label{eq:C-mass}
	C(v,r) \doteq 1-\frac{2 m(v,r)}{r}.
\end{equation}
In this parametrization, which corresponds to the advanced Eddington-Finkelstein coordinates in the vacuum case, a radial curve at constant $v$ describes an ingoing null geodesic.

Moreover, any two-dimensional metric is locally conformally flat, so we can choose to parametrize $\d \gamma^2$ in terms of double-null coordinates $(V,U)$ 
\begin{equation}
\label{eq:metric-symmetric}
	\d \gamma^2 = - 2A(V,U) \d V \d U
\end{equation}
with respect to the null normal directions $\pa_V$ and $\pa_U$. The orientation of the spacetime can be also chosen in such a way that $A(V,U) > 0$  and at spatial infinity $\pa_V r >0$, $\pa_U r < 0$. The metric is invariant under any re-parametrization $U \mapsto \tilde{U}(U)$ and $V \mapsto \tilde{V}(V)$, then we can represent the future-directed null normal vector fields as 
\begin{equation}
\label{eq:ell-pm}
	\ell_+ \doteq A^{-1} \pa_V, \qquad \ell_- \doteq \pa_U,
\end{equation} 
which, respectively, describe the outgoing and the ingoing light rays across the spheres that foliates $\cM$. In this parametrization, the vector field $\ell_+$ fulfils the geodesic equation whereas $\ell_-$ is an auxiliary vector. The normalization of $\ell_\pm$ is such that $g_{\mu\nu}\ell_+^\mu \ell_-^\nu = -1$. The local change of coordinates which relates the metrics \eqref{eq:metric_eddington} and \eqref{eq:metric-symmetric} is given by
\begin{subnumcases}{}
	2A \d U = \e^{2\Psi(v,r)}C(v,r) \d v   - 2\e^{\Psi(v,r)} \d r, \label{eq:change_coordinates_a} \\
	\d V = \d v. \label{eq:change_coordinates_b}
\end{subnumcases}
Following the conventions given by Hayward \cite{Hayward:1993mw,Hayward:1994bu,Hayward:1997jp,Hayward:2000ca}, each sphere that foliates $\cM$ is defined to be untrapped, marginal or trapped depending on whether the dual vector $\nabla^\mu r$ is spacelike, lightlike or timelike, respectively. If $\nabla^\mu r$ is future/past-directed, then the sphere is future/past trapped: the past case is related to white holes, whereas the future one to black holes, where both outgoing and ingoing light rays are trapped into the surfaces. An hypersurface foliated by marginal spheres is called a trapping horizon and a trapping horizon is outer, degenerate or inner when $\nabla^2 r > 0$, $\nabla^2 r = 0$ or $\nabla^2 r < 0$, respectively. In the $(V,U)$ foliation one considers the expansion parameters of the congruences of outgoing/ingoing radial null geodesics
\begin{equation}
\label{eq:expansions}
	\theta_+ \doteq \frac{2}{A r} \pa_V r, \qquad \theta_- \doteq \frac{2}{Ar} \pa_U r.
\end{equation}
Then, a trapping surface is defined as a compact spatial two-surface with $\theta_+\theta_- \geq 0$; it is future/past when $\theta_\pm > 0$ or $\theta_\pm < 0$, respectively, and marginal when $\theta_+= 0$. A trapping horizon is defined as an hypersurface foliated by marginal surfaces; moreover, it is future if $\theta_- <0$ or past if $\theta_- > 0$, outer if $\pa_U \theta_+ < 0$ or inner if $\pa_U \theta_+ >0$. Note that both the expansions are smooth functions outside $r=0$ since $\mathcal{M}$ is smooth. In the framework of black hole physics, an \textit{apparent horizon} is defined to be a future outer trapping horizon satisfying
\begin{equation}
\label{eq:FOTH}
	\theta_+ \eqH 0, \qquad \theta_- \leqH 0, \qquad \pa_U \theta_+ \leqH 0,
\end{equation}
where the subscript $\cH$ labels the evaluation on the apparent horizon. The first two conditions capture the fact that no outgoing ray can escape from $\cH$, different from the ingoing ones which are converging therein; the third condition means that the area of the outgoing congruence is increasing just outside $\cH$ and it is decreasing just inside $\cH$. On spherically symmetric spacetimes, the apparent horizon is the three-dimensional hypersurface
\begin{equation}
\label{eq:apparent-horizon}
	\cH = \left\{(p_\gamma,\Omega) \in \cM: \quad r - 2m = 0 \right\},
\end{equation}
which is also a dynamical horizon according to the definition given in \cite{Ashtekar:2002ag, Ashtekar:2003hk, Ashtekar:2004cn}. Hence, the mass of the black hole $M$ is defined as the Misner-Sharp energy evaluated on the horizon $r_\cH \doteq 2m$
\begin{equation}
\label{eq:M}
	M \doteq m(r_\cH) = \frac{r_\cH}{2}.
\end{equation}
In coordinates $(v,r)$ the apparent horizon is described by the line $r_\cH(v)$ defined by $C(v,r) =0$. Furthermore, the mass $M(v) \doteq m\at v,r_\cH(v)\ct$ is fully determined by the rate of evaporation
\begin{equation}
\label{eq:rate}
	\dot{M}(v) \doteq \pa_v M(v).
\end{equation}

On spherically symmetric spacetimes a preferred notion of ``time'' exists due to the definition of the \textit{Kodama vector} \cite{Kodama:1980ko,abreu2010kodama}
\begin{equation}
\label{eq:kodama}
	K \doteq g^{-1}(\ast dr) = A^{-1} \at \pa_V r \pa_U - \pa_U r \pa_V \ct, 
\end{equation} 
where $\ast$ is the Hodge operator in the space normal to the spheres. {This vector is proportional to the timelike Killing vector $\pa_t$ on static spherically symmetric spacetimes. The Kodama vector is divergenceless even if it is not a Killing field, $\nabla_\mu K^\mu = 0$, and furthermore the Kodama flux ${T^\mu}_\nu K^\nu$ defines a covariantly conserved current for any stress-energy tensor ${T^\mu}_\nu$, namely}
\begin{equation}
\label{eq:kodama-conservation}
	\nabla_\mu ({T^\mu}_\nu K^\nu) =0.
\end{equation}
The Kodama vector is timelike on untrapped spheres, i.e., in the region outside the horizon, and becomes lightlike on a marginal sphere, and eventually it is spacelike on trapped surfaces, i.e., in the interior of the black hole. From the definition of $K$, one obtains also that
\begin{equation}
	\mathscr{L}_K K_\mu = K^\nu \at \nabla_\nu K_\mu - \nabla_\mu K_\nu \ct \eqH \pm \kappa K_\mu,
\end{equation}
where $\mathscr{L}_K$ denotes the Lie derivative along $K$ and
\begin{equation}
\label{eq:kodama-box}
	\kappa \doteq \frac{1}{2} \gamma^{ij}\nabla_i\nabla_j r.
\end{equation}
Hence, \eq \eqref{eq:kodama-box} corresponds to the definition of the surface gravity for a dynamical black hole and it reduces to the standard one in the case of a Killing vector field (for a discussion about the different definitions of $\kappa$ in literature, see \cite{Vanzo:2011wq}). Thus, $\kappa$ represents the gravitational acceleration detected along the black hole horizon. From the definition \eqref{eq:kodama-box}, a trapping horizon is outer, degenerate or inner when $\kappa$ is positive, null or negative, respectively; in particular $\kappa > 0$ along an apparent horizon like $\cH$.

A thermodynamic interpretation of the evolution of the mass along apparent horizons has been given by Hayward \cite{Hayward:1997jp}, who proved the first law of black hole (thermo)dynamics for spherically symmetric black holes,
\begin{equation}
\label{eq:first-law}
	\pa_k  m = \frac{\kappa}{8\pi} \pa_k \cA + w \pa_k \cV.
\end{equation}
Here, $\pa_k f= k \cdot \nabla f$ denotes the derivative along any vector field $k$ tangent to the horizon, $m$ is the mass \eqref{eq:Hawking-mass}, $\kappa$ is the surface gravity \eqref{eq:kodama-box}, $\cA = 4\pi r^2$ is the area, $\cV = \frac{4}{3} \pi r^3$ the volume and $w \doteq -\frac{1}{2} \gamma^{ij}T_{ij} = A^{-1} T_{UV}$ is the work density done by the matter field on the horizon. Furthermore, assuming the Einstein equation $w = (8\pi)^{-1}{R_\theta}^\theta$,
\begin{equation}
\label{eq:kappa-w}
	 \kappa = \frac{m}{r^2} - 4\pi r w.
\end{equation}
As in the case of static black holes, the thermodynamic interpretation of \eq \eqref{eq:first-law} can be made precise only if the surface gravity is proportional to an actual temperature. This is really the case of a static black hole, where
$\kappa/2\pi$, in natural units, equals the Hawking temperature of Hawking radiation observed at future infinity \cite{Hawking1975,fredenhagen1990derivation,KAY199149,Wald:1999vt}. In the dynamical case, it is possible to show that the very same temperature can be seen in the tunnelling probability of matter across dynamical horizons. A derivation of this fact involving the WKB approximation for the one particle excitations can be found in \cite{DiCriscienzo:2007pcr, Hayward:2008jq}, based on the ideas presented in \cite{Parikh:1999mf}. Moreover, another derivation focusing on the properties of states for quantum fields near apparent horizons is presented in \cite{Kurpicz:2021kgf}. The latter observation enforces the statement that $\kappa$ defined in \eq \eqref{eq:kodama-box} must be a positive quantity at least near the apparent horizon.

\section{{Variation of the mass, energy fluxes, and their constraints from the causal past}}
\label{sec:variation-mass}

{Contrary to the case of a static null event horizon, an apparent horizon like the one in \eq \eqref{eq:apparent-horizon} can evolve as a dynamical trapping hypersurface in a process of black hole formation or evaporation, under the influence of the matter. Thus, one can also infer the dynamics of the mass of the black hole \eqref{eq:M}, which can respectively increase or decrease according to the evolution of the function $r_\mathcal{H}$. In this paper, the dynamical evolution of the apparent horizon and of the black hole mass is analyzed from a local point of view, assuming the (semiclassical) Einstein equations as the only dynamical equation governing the interplay between matter and geometry. We shall not refer to any asymptotic effect at large distances from or in the future of the black hole, because such a global approach would require the knowledge of the entire history of the spacetime. 	

Let us assume that the matter content is fully described by a generic stress-energy tensor $T_{\mu\nu}$}. According to the definition of trapping horizon, the local dynamics of $\cH$ can be related to the evolution of the expansion parameter $\theta_+$ given in \eq \eqref{eq:expansions} along an outgoing null geodesic. Denoting $\d / \d  \lambda \doteq \ell_+^\mu \nabla_\mu$ the directional derivative along $\ell_+$, the {equation $G_{VV} = 8\pi T_{VV}$} reads
\begin{equation}
\label{eq:Raychaudhuri}
	\frac{\d \theta_+}{\d \lambda} = -\frac{1}{2} \theta_+^2 - 8\pi {{T_{\mu\nu}}} \ell_+^\mu \ell_+^\nu,
\end{equation}
which represents the Raychaudhuri equation for the null affine-parametrized outgoing geodesics congruence \cite{Wald:1984rg} (it is sometimes referred also as the Landau-Raychaudhuri equation). {If the stress-energy tensor is associated to classical matter, the null energy condition $T_{\mu\nu} k^\mu k^\nu \geq 0$ holds for any null vector $k^\mu$ and hence $\d \theta_+ / \d \lambda \leq 0$}. Thus, assuming the initial condition $\theta_+(V_0) = 2/(Ar) > 0$ at the beginning of the collapse, there must exist a region where $\theta_+ = 0$ for $V > V_0$, namely that a trapped surface has formed during the gravitational collapse. On the other hand, when evaluated on the apparent horizon, {where $\theta_+ = 0$,} \eq \eqref{eq:Raychaudhuri} reduces to
\begin{equation}
\label{eq:Raychaudhuri-TVV}
	\frac{\d \theta_+}{\d \lambda} \eqH - 8\pi \frac{{T_{VV}(r_\cH)}}{A^2},
\end{equation}
{namely the evolution of the apparent horizon is directly related to the ingoing energy flux $T_{VV}$ evaluated on the horizon. If such component has a quantum nature, then it can violates the classical null energy condition, and hence {if $T_{VV}$ is negative on the horizon, it happens that} $\d \theta_+ / \d \lambda > 0$}. Therefore, in this case, the trapped surface formed during the collapse tends to disappear, namely it evaporates. This process of evaporation of the horizon makes manifest as a loss of the black hole mass given in \eq \eqref{eq:M}. Given a one-dimensional portion of horizon $\delta \cH \subseteq \pi_\gamma (\cH)$, where $\pi_\gamma: \Gamma \times \mathds{S}^2 \to \Gamma$ denotes the natural projection on the first pair of coordinates, let us define the variation of mass of the black hole on $\delta \cH$  
\begin{equation}
\label{eq:dM}
	\Delta M \doteq \int_{\delta \cH} \d m.
\end{equation}
In coordinates $(v,r)$, $\delta \cH$ is the line enclosed between two arbitrary points $(v_P,r_\cH(v_P))$ and $(v_Q,r_\cH(v_Q))$ in the $(v,r)$ plane. On $\cH$, the relation $\d r = (1-2\pa_r m)^{-1} 2\pa_v m \d v$ holds and 
\eq \eqref{eq:Raychaudhuri-TVV} becomes a dynamical law for the rate $\dot{M}$ defined in \eq \eqref{eq:rate}. Actually, after rescaling $v$ so that $\Psi(v,r_\cH) = 0$, both \eq \eqref{eq:Raychaudhuri-TVV} and {$G_{vv} r^2 \eqH 8\pi T_{vv} r^2$} read
\begin{equation}
\label{eq:TVV-mass}
	\dot{M}(v) = \cA_\cH (v) {T_{vv}}(v,r_\cH(v)),
\end{equation}
where $\cA_\cH \doteq 4\pi r^2_\cH$ denotes the area of the horizon. Thus, taking into account the surface gravity \eqref{eq:kodama-box} on $\cH$,
\begin{equation}
\label{eq:dM-Tvv}
	\Delta M = \int_{v_P}^{v_Q} \frac{\dot{M}}{4M \kappa} \d v = 4\pi \int_{v_P}^{v_Q} \frac{M}{\kappa} {T_{vv}}(r_\cH) \d v.
\end{equation}
Since $\kappa \geqH 0$ and ${T_{vv}} \eqH {T_{VV}}$, both the rate of evaporation \eqref{eq:rate} and the variation of the mass \eqref{eq:dM} are negative when ${T_{VV}(r_\cH)} < 0$. Moreover, given the vector $n_{\cH} \doteq g^{\mu\nu}\pa_{\nu} \cH \pa_\mu$ normal to the apparent horizon in the $(v,r)$ plane, then
\begin{equation}
\label{eq:normal-vector}
	g_{\mu\nu} n_{\cH}^\mu n_{\cH}^\nu \eqH -16 M \kappa \dot{M}.
\end{equation}
So, if $\Delta M < 0$ then $\dot{M} < 0$ and $n_{\cH}$ is spacelike, hence $\cH$ is a timelike surface, namely the corresponding black hole is evaporating.

{It is actually difficult to evaluate or to estimate directly the negative energy flux across the horizon.
However, it turns out that some constraint for $T_{VV}(r_\cH)$ and thus for $\Delta M$ can be given in terms of the matter stress-energy tensor evaluated in the causal past and outside the black hole horizon.
This shall be done by applying the divergence theorem (Stokes' theorem) to the currents obtained contracting the stress-energy tensor with suitable vector fields and constructed in such a way that the corresponding flux across $\delta \cH\times \mathds{S}^2$ coincides with $4\pi \Delta M$. In the next section we shall discuss how some components of the stress-energy tensor can be constrained outside the black hole horizon with the trace anomaly by employing this analysis.
More precisely, the following currents can be obtained contracting the stress-energy tensor with the gradient $\nabla r$ and with the Kodama vector \eqref{eq:kodama}}: 
\begin{align}
	J^\mu_r &\doteq {T^{\mu\nu}} \nabla_\nu r = J^\mu_1 + J^\nu_2 = {T^{U\mu}} \pa_U r + {T^{V\mu}} \pa_V r \label{eq:current-r},\\
	J^\mu_K &\doteq {T^{\mu\nu}} K_\nu = J^\mu_1 - J^\nu_2 = {T^{U\mu}} \pa_U r - {T^{V\mu}} \pa_V r\label{eq:current-k}.
\end{align}
Denoting by $\nabla \cdot J \doteq \nabla_\mu J^\mu$ the divergence of the current $J$, from \eq \eqref{eq:kodama-conservation} it follows $\nabla \cdot J_K = 0$, which implies that $\nabla \cdot J_1 = \nabla \cdot J_2$ and $\nabla \cdot J_1 = \frac{1}{2} \nabla \cdot J_r$. {Moreover, on the horizon $J_2 = 0$ because $\pa_V r =  0$, hence the flux across $\delta \cH\times \mathds{S}^2$ of $J_r$ and $J_K$ coincide.
A direct analysis of this flux shows that the flux across $\delta \cH \times \mathds{S}^2$ of both $J_r$ and $J_K$ coincides (up to a factor $4\pi$) with $\Delta M$ given in \eq \eqref{eq:dM}.

The domain over which the divergence theorem is applied to obtain $\Delta M$ is actually spherically symmetric and it has the form $D\times\mathds{S}^2$, where $D$ is a suitable portion of $\Gamma$.} To define $D$ more precisely, consider
\begin{align}
	S_0 &\doteq\{(V,U,\theta,\varphi)\in \mathcal{M} \mid V=V_0, \ U=U_0\} \label{eq:defS0}, \\
	S_1 &\doteq\{(V,U,\theta,\varphi)\in \mathcal{M} \mid V=V_1, \ U=U_0\} \label{eq:defS1}, 
\end{align}
where $U_0$, $V_0$ and $V_1>V_0$ are chosen such that both $S_0$ and $S_1$ lie outside the apparent horizon $\cH$ and in such a way that $(V_1,U_2) \times \mathds{S}^2$ is contained on $\mathcal{H}$ for some $U_2$. Consider now $\delta \cH \times \mathds{S}^2$ the portion of $\cH$ which intersects $J^+(S_0) \cap (\mathcal{M}\setminus I(S_1))$, and denote by $P_\cH = (V_1,U_2)$ and $Q_\cH = (V_0,U_1)$ the extreme points of $\delta \cH$ in the $(V,U)$ plane. The domain $D\times\mathds{S}^2$ is then obtained by considering the portion of $J^+(S_0)  \cap (\mathcal{M}\setminus I(S_1))$ which lies outside the apparent horizon $\cH$. If $\cH$ is spacelike or null,
\begin{equation}
\label{eq:D-space}
	D\times \mathds{S}^2 \doteq   J^+(S_0)  \cap J^-(\delta \cH \times \mathds{S}^2),
\end{equation}
while, if $\cH$ is timelike, 
\begin{equation}
\label{eq:D-time}
	D\times \mathds{S}^2 \doteq   J^+(S_0)  \cap O, 
\end{equation} 
where $O$ is the portion of $J^-(\delta \cH \times \mathds{S}^2)$ which lies outside the horizon. With these definitions, $\rho_0, \delta_0, \gamma \in \partial D$ denote the one-dimensional curves in the $(V,U)$ plane between $(V_0,U_1)$ and $(V_0,U_0)$, $(V_0,U_0)$ and $(V_1,U_0)$, $(V_1,U_0)$ and $(V_1,U_2)$, respectively. See Figure \ref{fig:D} for a representation of $D$.
\captionsetup[figure]{labelfont={bf},labelformat={default},name={Figure}}
\begin{figure}[ht] \centering{\includegraphics[scale=0.85]{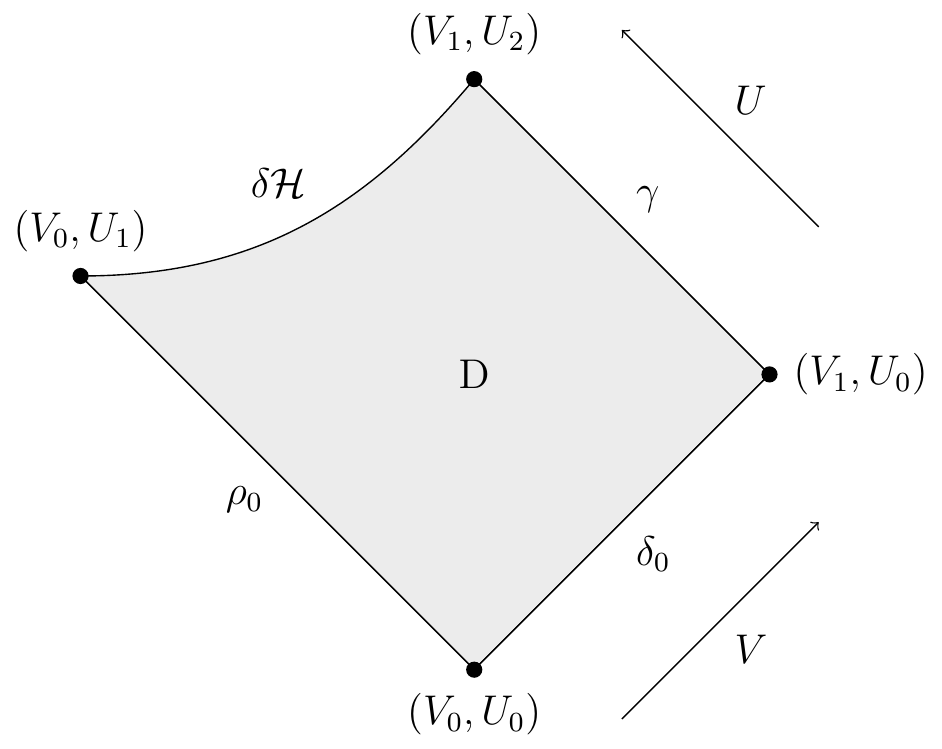}}
	\caption{\footnotesize Picture of the domain of integration $D$ given a spacelike portion $\delta \cH$ of the apparent horizon. In this case, $D\times \mathds{S}^2 = J^{+}((V_0,U_0) \times \mathds{S}^2) \cap J^{-}(\delta \cH\times \mathds{S}^2)$, with $U_2 < U_1$. The initial data are posed on the curves $\rho_0$ and $\delta_0$ of the boundary $\partial D$.}
	\label{fig:D}
\end{figure}

{With this definition of $D$ and with $S_0$ given by \eq \eqref{eq:defS0}, let us consider a stress-energy tensor $T_{\mu\nu}$ which respects the spherical symmetry and which satisfies the following initial conditions at the boundary of $J^+(S_0)$:
\begin{equation}
\label{eq:initial-condition}
	T_{\mu\nu}(p) =0, \qquad p\in \partial J^+(S_0).
\end{equation}
With the choice of $S_0 \in \mathscr{I}^-(\Gamma)$, such an initial condition states that there is no influence of the matter in the past infinity. Then, applying the divergence theorem (Stokes' theorem) to the current $J_1$ on the domain $D\times \mathds{S}^2$, following the derivation given in Appendix \ref{app:proof-DeltaM}, one 
obtains that
\begin{equation}
\label{eq:dM-TUV-J}
	{\Delta M} = -\at \cS + \cW \ct,
\end{equation}
where
\begin{align}
	\cS &\doteq 2\pi \int_{D} \nabla \cdot J_r \d \cV_D \label{eq:flux-S}, \\
	\cW &\doteq 4\pi \int_{\gamma} \frac{T_{UV}r^2}{A} (-\pa_U r) \d U \label{eq:flux-W}.
\end{align}
Here $\d \cV_D \doteq A r^2 \d V \wedge \d U$ denotes the volume form  on $D$ in the $(V,U)$ plane. In particular, according to \cref{eq:dM-TUV-J,eq:flux-S,eq:flux-W}, $\cS$ is the matter source inside the domain $D$, whereas $\cW$ is related to the component of $T_{\mu\nu}$ that is associated to the work $w$ done by the matter when evaluated on $\cH$, in view of Hayward's first law \eqref{eq:first-law}. In the next section we shall see how the quantum trace anomaly shown by the stress-energy tensor outside the black hole horizon forces $\mathcal{S}$ to be positive. We remark in passim that applying the divergence theorem to the current $J_K$ and comparing the result with \eq \eqref{eq:dM-TUV-J} yield $\cR = \cS$, where
\begin{equation}
\label{eq:flux-R}
	\cR \doteq 4\pi \int_{\gamma} \frac{T_{UU}r^2}{A} \pa_V r \d U
\end{equation}
is related to the outgoing energy flux $T_{UU}$ across $\gamma$. Hence, in the case of stress-energy tensors satisfying $\cW = 0$ it follows that $\Delta M = - \cR$. Finally, as before, the constraints on $\mathcal{S}$, that shall be imposed by the trace anomaly in the next section, force also the outgoing energy flux across $\gamma$ to be positive.}

\section{{Evaporation induced by the quantum trace anomaly}}
\label{sec:trace-anomaly}

In the previous section, \eq \eqref{eq:dM-TUV-J} showed that the ingoing energy flux $T_{VV}$ on the horizon together with $\Delta M$ are constrained by the matter content outside and in the causal past of the horizon, encoded in the source $\cS$ and in the flux $\cW$ given in \cref{eq:flux-S,eq:flux-W} on the domain $D \times \mathds{S}^2$. Now, we shall see that a negative ingoing flux on the horizon, and thus the evaporation, can be obtained considering the non-vanishing trace anomaly of a quantum stress-energy tensor $\expvalom{T_{\mu\nu}}$. Actually, this anomalous trace forces $\cS$ to be positive, and thus $\Delta M$ to be negative according to \eq \eqref{eq:dM-TUV-J}, provided that an auxiliary averaged energy condition is also assumed to control $\cW$.

For a free massless conformally coupled scalar field $\phi$, the non-vanishing trace of the quantum stress-energy tensor $\expvalom{T_{\mu\nu}}$ does not depend on the choice of the quantum state $\omega$, but it is fixed by the geometry of the spacetime to be equal to the quantum trace anomaly. In four dimensions, it reads  
\begin{equation}
\label{eq:T-trace}
	\expvalom{{T_{\rho}}^{\rho}} = \lambda \at {C_{\alpha\beta}}^{\gamma\delta}{C_{\gamma\delta}}^{\alpha\beta} + {R_\mu}^\nu {R_\nu}^\mu - \frac{1}{3}R^2 \ct,
\end{equation}
where
\begin{equation}
\label{eq:lambda}
	\lambda \doteq \frac{1}{720 (4\pi^2)},
\end{equation}
${C_{\alpha\beta}}^{\gamma\delta}$ is the Weyl tensor, ${R_\mu}^\nu$ the Ricci tensor and $R$ the Ricci scalar. The $\square R$ appearing e.g. in \cite{Wald:1978pj} has been cancelled from eq. \eqref{eq:T-trace} by carefully choosing the renormalization freedoms inside the definition of $\expvalom{{T_{\rho}}^{\rho}}$. As discussed in \cite{Hollands:2001nf,Hollands:2004yh}, there is always enough freedom to remove higher order derivatives at the level of the trace. Actually, higher-order derivatives always appear in any renormalized quantum stress-energy tensor (see, e.g., the references given in the \hyperref[sec:intro]{Section 1} about the computation of the stress-energy tensor in the Schwarzschild spacetime). These higher order derivatives could be eliminated at least at the level of the dynamical equations by imposing the consistency of the semiclassical theory as an expansion in $\hbar$ (see, e.g., \cite{Simon:1991hd}). As in this paper we work only with the explicit form of the trace, we do not need to follow that way of reasoning and we simply use the renormalization freedom to remove higher order derivatives from the trace. 
	
A quantum averaged weak energy condition is usually a non-local constraint of the form
\begin{equation}
\label{eq:energy-condition}
	\lim_{\lambda \to \infty} \inf \int f(t/\lambda)^2 \expval{\wick{T_{\mu\nu}}}_\omega k^\mu  k^\nu (\gamma(t)) \d t \geq 0
\end{equation}
for any quantum state $\omega$ where $\expvalom{T_{\mu\nu}}$ can be evaluated. Here, $k^\mu$ is the tangent vector to the affine-parametrized timelike/null geodesic $\gamma(t)$ and $f$ is a real-valued smooth function having compact support on the domain of $\gamma$. It has been shown that this class of averaged energy conditions are valid both in flat and globally hyperbolic spacetimes for some values of the coupling parameter $\xi$, including the conformally coupled case (see \cite{Kontou:2020bta} and references therein). Moreover, conditions like \eq \eqref{eq:energy-condition} can also hold outside the limit and employing a specific sampling function for a restricted class of vacuum-like reference states. The sampling functions are often positive and smooth everywhere in their domain and also decay sufficiently fast at infinity. For instance, some common choices are Gaussian functions or compactly supported test functions having exponential decay in Fourier space, see, e.g., \cite{Fewster:2010mc,Fewster:2020sta,Wu:2021ktn} and references therein. In our case, we shall employ an exponential smooth function $f(V,U)$ constructed from the geometry of the background, whose role is to tame those inside $\expvalom{{T_{\rho}}^{\rho}}$ which do not contribute to a negative variation of the mass $\Delta M$. Thus, the following theorem holds.

\begin{theorem}
	\label{theo:evap}
	Consider a free quantum massless, conformally coupled scalar field $\phi$ propagating on a spherically symmetric dynamical background, whose metric, expressed according to \eq \eqref{eq:metric-symmetric}, solves the semiclassical Einstein equations \eqref{eq:SCE} for a quantum state $\omega$. Suppose that $\omega$ is such that it makes the initial conditions stated in \eq \eqref{eq:initial-condition} valid for the quantum stress-energy tensor $\expvalom{T_{\mu\nu}}$. Let 
	\begin{equation}
	\label{eq:weight-function}
		f(V,U) \doteq f_0(V) \exp(-8\pi \lambda \beta(V,U)),
	\end{equation}		
	where $\beta(V,U)$ is any solution of $\partial_U\beta(V,U)= \frac{2}{r} \partial_V r {R_U}^V$, and $f_0(V) = \exp{- k (V-V_0)}$, for any $V \geq V_0$, is an exponentially decreasing function with a sufficiently large $k>0$. Let ${\Delta M}$ be given by \eq \eqref{eq:dM}. If
	\begin{equation}
	\label{eq:AEC}
		\int_{U_0}^{U_\cH} \frac{\expvalom{T_{UV}} r^2}{A} f(V,U) A \d U \geq 0
	\end{equation}
	in the domain $\text{D} \times \mathds{S}^2$ defined in \cref{eq:D-space,eq:D-time}, for any $V \in [V_0,V_1]$, (i.e., the integral in the left-hand side of the inequality is taken along any ingoing radial null curve connecting the initial point $(V,U_0)$ and $(V,U_\cH) \in \delta \cH$),  then, ${\Delta M} < 0$, namely the evaporation occurs along $\delta \cH$.
\end{theorem}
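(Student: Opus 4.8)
The plan is to reduce the claim to the single inequality $\cS+\cW>0$ and then to read off its summands from the structure of the bulk source in \eqref{eq:flux-S}, the only genuinely quantum ingredient being the trace anomaly \eqref{eq:T-trace}. First I would invoke the representation \eqref{eq:dM-TUV-J}, $\Delta M=-(\cS+\cW)$ with $\cS,\cW$ as in \cref{eq:flux-S,eq:flux-W}, so that it suffices to show $\cS+\cW>0$. On a self-consistent solution the quantum stress tensor is conserved, hence the integrand of $\cS$ is $\nabla\cdot J_r=\expvalom{T^{\mu\nu}}\nabla_\mu\nabla_\nu r$. Computing the Hessian of $r$ in the chart \eqref{eq:metric-symmetric} — where the only nonvanishing two-dimensional Christoffel symbols are $\Gamma^V_{VV}=\partial_V\ln A$ and $\Gamma^U_{UU}=\partial_U\ln A$, while on each sphere $\nabla_\theta\nabla_\theta r=r(\nabla r)^2$ — yields the pointwise decomposition
\[
\nabla\cdot J_r=\frac{(\nabla r)^2}{r}\,\expvalom{{T_\rho}^\rho}
+\left(\frac{2(\nabla r)^2}{Ar}+\frac{2\,\partial_U\partial_V r}{A^2}\right)\expvalom{T_{UV}}
+\frac{1}{A^2}\left(\expvalom{T_{UU}}\,\nabla_V\nabla_V r+\expvalom{T_{VV}}\,\nabla_U\nabla_U r\right),
\]
in which the angular contribution has assembled the full trace $\expvalom{{T_\rho}^\rho}$. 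The first term is thus geometric and state-independent, the second is exactly the component smeared in \eqref{eq:AEC}, and the last two carry the in/outgoing fluxes.

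Next I would insert the weight $f$ of \eqref{eq:weight-function}. Using \eqref{eq:SCE} to trade Ricci for stress-energy, the spherically symmetric identity ${R_U}^V=8\pi\expvalom{{T_U}^V}=\frac{2}{Ar}\nabla_U\nabla_U r$ shows that the defining equation $\partial_U\beta=\frac{2}{r}\partial_V r\,{R_U}^V$ is tuned precisely to the $\expvalom{T_{VV}}\nabla_U\nabla_U r$ flux term above, and equivalently to the ${R_U}^V{R_V}^U$ cross-term hidden inside ${R_\mu}^\nu{R_\nu}^\mu$ in the anomaly. Multiplying the integrand by $f=f_0(V)\e^{-8\pi\lambda\beta}$ and integrating by parts in $U$, the logarithmic derivative $\partial_U\e^{-8\pi\lambda\beta}=-8\pi\lambda\,\partial_U\beta\,\e^{-8\pi\lambda\beta}$ reproduces exactly this ${R_U}^V$-linear structure and collapses it into a total $U$-derivative, so the offending flux term is pushed into boundary data and the surviving $\expvalom{T_{UV}}$ terms acquire the common weight $r^2 f$ that appears in \eqref{eq:AEC}.

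I would then apply Stokes' theorem over $D\times\mathds{S}^2$ along the lines of Appendix \ref{app:proof-DeltaM} and collect three pieces. The boundary integrals along $\rho_0$ and $\delta_0$ vanish by the initial condition \eqref{eq:initial-condition}; the weighted $U$-integrals of $\expvalom{T_{UV}}$ produced at each slice $V\in[V_0,V_1]$, together with the $\gamma$-boundary term at $V=V_1$ that recovers $\cW$, are precisely the quantities constrained by \eqref{eq:AEC} and hence nonnegative by hypothesis; what remains is a bulk integral of the anomaly $\expvalom{{T_\rho}^\rho}$ against a positive kernel. Since the anomaly is not pointwise positive — the Weyl piece in \eqref{eq:T-trace} is nonnegative but ${R_\mu}^\nu{R_\nu}^\mu-\frac13 R^2$ is indefinite — this is where the exponentially decreasing $f_0=\e^{-k(V-V_0)}$ enters: an integration by parts in $V$ converts $\partial_V f_0=-kf_0$ into a factor $+k$, so for $k$ large enough the $O(k)$ positive contribution dominates the bounded indefinite anomaly terms uniformly on $D$. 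This is the promised taming role of $f_0$, and together the three pieces give $\cS+\cW>0$, whence $\Delta M<0$.

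The hard part will be this last positivity. Because the sign of the geometric anomaly contribution is not pointwise, the entire conclusion rests on the uniform large-$k$ domination estimate, which requires bounding the anomaly integrand and its $V$-derivative on the relatively compact $D$ and verifying that $\e^{-8\pi\lambda\beta}$ stays bounded above and below there, so that the kernel multiplying the $O(k)$ term is genuinely positive and $k$ can be fixed once and for all. A secondary but indispensable bookkeeping point is to check that the $\gamma$-boundary term generated by the $U$-integration by parts reconstitutes $\cW$ exactly, with no uncontrolled remainder; it is this matching that makes the single hypothesis \eqref{eq:AEC} sufficient to force the evaporation along $\delta\cH$.
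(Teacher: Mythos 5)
Your skeleton matches the paper's: a weighted current $\tilde J_\mu = f\,\expvalom{T_{V\mu}}$, Stokes' theorem on $D\times\mathds{S}^2$ with the past boundary terms killed by \eqref{eq:initial-condition}, the $U$-dependence of $f$ (via $\beta$) tuned to cancel the term proportional to ${R_V}^U\propto\expvalom{T_{VV}}$ together with the $2{R_V}^U{R_U}^V$ cross-term of the anomaly, and the hypothesis \eqref{eq:AEC} absorbing the smeared $\expvalom{T_{UV}}$ contributions. But the step that actually produces the strict inequality $\Delta M<0$ is wrong in your version. You propose that the indefinite part of the anomaly is dominated by an $O(k)$ positive contribution generated by $\partial_V f_0=-kf_0$. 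The term carrying $\partial_V f_0$ is, however, proportional to ${R_\theta}^\theta\propto\expvalom{T_{UV}}$ smeared in $U$ (the paper's $\gamma_2$): the hypothesis only guarantees it is $\geq 0$, and it vanishes identically in the very examples the theorem is meant to cover (Vaidya, where $T_{UV}=0$), so it cannot dominate anything. The paper's mechanism is different on both counts. First, in spherical symmetry the anomaly is decomposed as
\[
\expvalom{{T_\rho}^\rho}=\lambda\Bigl(4\bigl({R_U}^U+{R_\theta}^\theta+\tfrac{6\kappa}{r}\bigr)^2+\tfrac23\bigl(({R_U}^U)^2+({R_\theta}^\theta)^2\bigr)+2{R_V}^U{R_U}^V-\tfrac83{R_U}^U{R_\theta}^\theta\Bigr),
\]
and the manifestly nonnegative squares — bounded below by a multiple of $\kappa^2$, which is strictly positive near the apparent horizon — are the sole source of strict negativity of $\Delta_h M$. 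Second, the leftover indefinite cross-term $-\tfrac83{R_U}^U{R_\theta}^\theta$ is again proportional to $\expvalom{T_{UV}}$ and hence vanishes at $V=V_0$ by the initial conditions; the large parameter $k$ is then used not for domination but for \emph{concentration}: $f_0=\e^{-k(V-V_0)}$ localizes the $V$-integral of $\gamma_1+\gamma_3$ near $V_0$, where $\gamma_1=0$ and $\gamma_3>0$. Without the explicit decomposition of the anomaly and the $\kappa^2$ lower bound, your argument has no strictly positive term anywhere and cannot conclude $\Delta M<0$.

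A second, smaller gap: your plan announces a reduction to $\cS+\cW>0$ but then weights everything by $f$, after which you are estimating the weighted variation $\Delta_h M=\int_{\delta\cH}h\,\d m$ with $h=fA/(-\partial_U r)$, not $\cS+\cW$. You never supply the final step returning from $\Delta_h M<0$ to $\Delta M<0$, which in the paper rests on $h$ being smooth, positive and bounded from below on $\delta\cH$ (the asserted chain $0>\Delta_h M\geq C\Delta M$). This needs to be stated, since the two quantities are not proportional.
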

\begin{proof}
	See the Appendix \ref{app:proof-evap}
\end{proof}

For a qualitative behaviour of $f(V,U)$ in some special cases, see Figure \ref{fig:f}.

\captionsetup[figure]{labelfont={bf},labelformat={default},name={Figure}}
\begin{figure}[ht] 
	
	\begin{picture}(250,160)(0,0)
	
	\put(125,0){\includegraphics[scale=0.75]{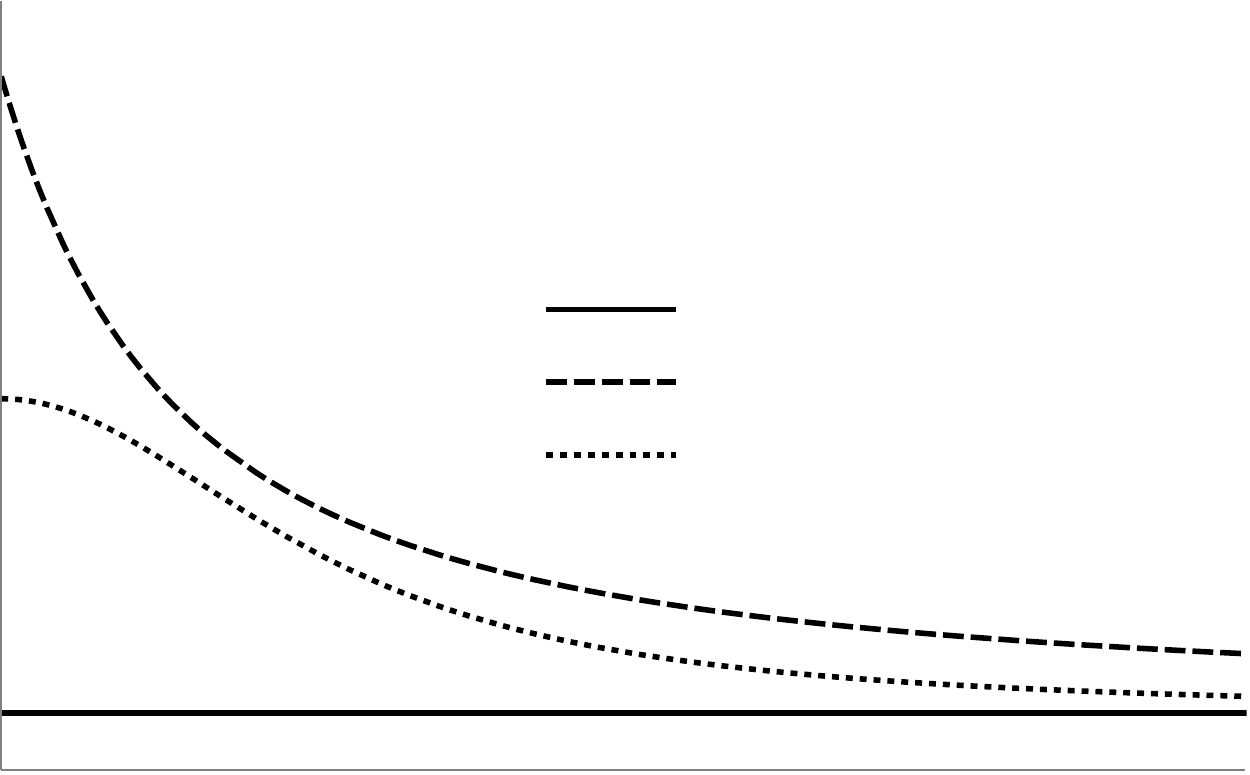}}
	
	\put(75,160){\large $f(V_1,U)$}
	
	\put(405,-10){\large $U$}
	
	\put(110,-10){\large $U_\cH$}
	
	\put(287,98){\large ${R_U}^V = 0$}
	
	\put(287,82){\large ${R_U}^V \sim U^{-1}$}
	
	\put(287,66){\large ${R_U}^V \sim U^{-2}$}
	
	\end{picture}
	
	\bigskip
	
	\caption{\footnotesize Plots of the qualitative behaviour of the smearing function $f(V,U)$ \eqref{eq:weight-function} at fixed $V = V_1$ for special choices of spacetime geometries. In the Schwarzschild and in the Vaidya spacetimes (${R_\theta}^\theta = {R_U}^V = 0$), $f$ is trivially a positive constant. In spacetimes where ${R_\theta}^\theta = 0$ which are asymptotically flat (i.e., $\pa_V r / r$ decays as $U^{-1}$ and ${R_U}^V$ at least as $U^{-1}$ for large $U$), $f$ is bounded and greater than a strictly positive constant, and it approaches that constant for large $U$. In this respect, $f$ is similar to the sampling functions entering usual quantum averaged weak energy conditions (see, e.g., \cite{Fewster:2020sta}).}
	\label{fig:f}
\end{figure}

At this stage, some remarks can be made about the formulation of Theorem \ref{theo:evap}. In this result, the quantum trace anomaly indeed drives the evaporation of the spherical black hole, because in the case of classical free matter field the initial conditions \eqref{eq:initial-condition} would imply that the stress-energy tensor vanishes on $\pa D \times \mathds{S}^2$. Thus, \eq \eqref{eq:AEC} would hold trivially and $\Delta M = 0$, namely $\delta \cH$ would be stable under the influence of the matter field. Moreover, the quantum averaged energy condition stated in the inequality \eqref{eq:AEC} is compatible with the thermodynamic interpretation of $w = A^{-1} {T_{UV}}$ given by Hayward as the work done by the matter on the horizon, which is expected to be positive at least on its average. Also, we believe that this averaged condition can be formulated in more general terms and for a larger class of smooth functions $f(V,U)$, once a sufficiently well-behaved state $\omega$ has been chosen on spherically symmetric spacetimes. For instance, we can expect that such reference state fulfils also the averaged energy condition $\cR > 0$, where $\cR$ was defined in \eq \eqref{eq:flux-R}, with $\pa_V r$ as smearing function. In this case, the quantum outgoing flux $\expvalom{T_{UU}}$ could be interpreted as Hawking radiation emitted from the evaporating $\delta \cH$ and sourced by the trace anomaly inside $\cS$. Unfortunately, the lack of control on the evolution of a quantum state $\omega$ which was a vacuum in the past, namely satisfying \eq \eqref{eq:initial-condition}, prevents us to formulate explicitly a general quantum energy condition compatible with all the previous statements.
	
However, we expect that the condition \eqref{eq:AEC} is fulfilled at least in an approximate way in the causal past. Actually, classical solutions are approximately valid in semiclassical gravity, because quantum corrections are small outside the horizon. Under this approximation, the condition \eqref{eq:AEC} is often satisfied - even pointwise - in the most realistic spherically symmetric models of collapse, where the classical matter sourcing the background fulfils the dominant energy condition $A^{-1} T_{UV} \geq 0$. As examples, we can think at the Lema\^itre-Tolman-Bondi models like the Oppenheimer-Snyder solution, where the collapse is driven by an (in)homogeneous spherical cloud of dust at zero pressure satisfying the weak energy condition, see, e.g., \cite{Griffiths:2009dfa} and references therein. Furthermore, in Christodoulou's work about the collapse in the case of matter described by a classical scalar field \cite{Christodoulou1986:global,Christodoulou1986:self,Christodoulou:1991yfa} the collapsing matter is described by a classical massless scalar field $\phi$ that is invariant under rotations and having stress-energy tensor given by  $T_{\mu\nu} = \pa_\mu\phi\pa_{\nu}\phi -\frac{1}{2}g_{\mu\nu}\pa_\rho \phi \pa^\rho \phi$. Hence, $T_{UV} =0$ and the dominant energy condition holds again.

\medskip

There are two special backgrounds fulfilling $T_{UV} = 0$ from the Einstein equation $G_{UV} = 8\pi T_{UV} = 0$, namely the Schwarzschild and the Vaidya spacetimes. The former describes a static spherically symmetric black hole, while the latter defines the geometry outside a null radiating star, see, e.g., \cite{Griffiths:2009dfa} and references therein. In Bardeen-Vaidya parametrization \eqref{eq:metric_eddington}, the Schwarzschild and the Vaidya metrics are obtained by choosing $m \in \mathds{R}^+$ and $m=m(v)$, respectively, and by fixing $\Psi(v,r) = 0$. In these cases, the semiclassical regime where $R_{\mu\nu\rho\sigma}R^{\mu\nu\rho\sigma} \ll m_P^4$ is always satisfied for $m / m_P \gg (3/4)^{1/4} \simeq 0.93$, which holds for astrophysical masses (for a solar mass, $M_{\astrosun} / m_P \simeq 10^{38}$). In the Vaidya spacetime, the trace anomaly reads
\begin{equation}
\label{eq:trace-Vaydia}
	\expvalom{{T_{\rho}}^{\rho}} = 48 \lambda \frac{M(v)^2}{r^6},
\end{equation}
where $\lambda$ is the coefficient \eqref{eq:lambda}. In this case the rate of evaporation can be directly computed using \eq \eqref{eq:TVV-mass} after evaluating the negative ingoing flux $\expvalom{T_{VV}}$ on $\cH$. To obtain $\expvalom{T_{VV}}$, we employ the conservation equation $\nabla^\mu\expvalom{T_{\mu V}} = 0$, which yields
\begin{equation*}
\label{eq:flux_diff_V}
	-\frac{1}{A r^2} \pa_U \at \expvalom{T_{VV}} r^2 \ct - \frac{1}{r^2} \pa_V \at A^{-1}\expvalom{T_{UV}} r^2 \ct -2 \expvalom{{T_{\theta}}^{\theta}} \frac{\pa_V r}{r} = 0,
\end{equation*}
where 
\begin{equation}
\label{eq:trace-relation}
	\expvalom{{T_{\theta}}^{\theta}} = \frac{\expvalom{T_{UV}}}{A} + \frac{1}{2} \expvalom{{T_\rho}^\rho}.
\end{equation}

It is usually very challenging to evaluate the renormalized quantum stress-energy tensor on a state which is a vacuum state in the past. Furthermore, contrary to its classical counterpart (see, e.g., \cite{davies1976energy} for the two-dimensional case), $\expvalom{T_{UV}}$ is expected not to vanish in a generic quantum state $\omega$, and hence Vaidya spacetime is not expected to be a full solution of the semiclassical equations. Here, we shall assume for simplicity that there exists a quantum state $\omega$ in which $\expvalom{T_{UV}} = 0$ and which makes Vaidya spacetime a semiclassical solution outside the horizon. With this assumption, the ingoing flux fulfils the following differential equation in $(v,r)$ coordinates
\[
	\pa_r (\expvalom{T_{VV}} r^2) = 24 \lambda M(v)^2 \at \frac{1}{r^5} - \frac{2M(v)}{r^6} \ct.
\]
Integrating in $(r_\cH, \infty)$, imposing the initial condition $\expvalom{T_{VV}}r^2 \rightarrow 0$ when $r \rightarrow \infty$, and changing sign, the ingoing flux reads
\begin{equation}
\label{eq:evaporation-Vaidya-TVV}
	\expvalom{T_{VV}} r^2 \eqH -24 \lambda M^2 \int_{r_\cH}^{+\infty} \at \frac{1}{r^5} - \frac{2M(v)}{r^6} \ct \d r = -\frac{3\lambda}{40M(v)^2}.
\end{equation}
Hence, the rate of evaporation obtained from \eq \eqref{eq:TVV-mass} is
\begin{equation}
\label{eq:evaporation-Vaidya-rate}
	\dot{M}(v) = -\frac{3 \pi \lambda}{10M(v)^2}.
\end{equation}
\Eq \eqref{eq:evaporation-Vaidya-rate} is an ordinary differential equation with respect to $v$ and it can be integrated by separation of variables, yielding the evaporation law
\begin{equation}
\label{eq:evaporation-Vaidya-law}
	M^3(v) = M_0^3 - \frac{9 \pi \lambda}{10}(v-v_0),
\end{equation}
where we have defined the total initial mass $M_0 \doteq M(v_0)$ at the initial time $v_0$. Thus, the evaporation process is completed in the time interval $\Delta v \doteq v - v_0 = 10M_0^3/(9\pi \lambda)$. Moreover, according to the first law given in \eqref{eq:first-law} for the Vaidya spacetime, the negative rate \eqref{eq:evaporation-Vaidya-rate} induces also a shrink of the area of the horizon $\cA_\cH$, whose rate of variation $\dot{\cA}_\cH(v) \doteq \pa_v \cA(v,r_\cH(v))$ is governed by $\dot{M}(v) = (\kappa / 8\pi) \dot{\cA}_\cH(v)$. Hence, a negative variation of the Wald-Kodama dynamical entropy $S_\cH \doteq \cA_\cH / 4$ \cite{Ashworth:1998uj,Hayward:1999347} holds, namely
\begin{equation}
\label{entropy-Vaidya}
	\frac{\d S_\cH(v)}{\d v} = - \frac{12 \pi^2 \lambda}{5 M(v)}.
\end{equation}

Therefore, it follows that the Schwarzschild spacetime is not a solution of the semiclassical Einstein equations, because the variation of the mass trivially vanishes in the case of constant mass. This prevents to obtain an equation for the rate like \eq \eqref{eq:evaporation-Vaidya-rate} in the static case. Hence, an eternal black hole cannot be in equilibrium with the back-reaction of any quantum matter field outside the horizon which is in a vacuum state in the asymptotic causal past.

The only quantum property of matter which was used to obtain black hole evaporation is the anomalous contribution to the trace of the quantum matter stress-energy tensor. Hence, an immediate generalization of the foregoing argument may be carried out by extending the analysis to arbitrary massless conformally coupled fields, after modifying the coefficient $\lambda$ inside the trace anomaly given in \eq \eqref{eq:T-trace}. In the general case, the four-dimensional anomalous trace is given by $\expvalom{{T_{\rho}}^{\rho}} = b_F F + b_G G$, where $F = C_{\alpha\beta\gamma\delta}C^{\alpha\beta\gamma\delta}$ is the square of the Weyl tensor, $G = R_{\mu\nu\rho\sigma} R^{\mu\nu\rho\sigma} - 4R_{\mu\nu}R^{\mu\nu} + R^2$ is the Euler density and $b_F$, $b_G$ are coefficients depending on the numbers of particles $n_s$ of spin $s$. For the explicit values of $b_F$ and $b_G$, see \cite{birrell1984quantum}. Arguably, a generalization of the Theorem \ref{theo:evap} can be obtained for arbitrary fields after choosing properly the coefficients inside $\expvalom{{T_{\rho}}^{\rho}}$. Further generalizations of the analysis presented in this paper beyond the spherically symmetric case are harder to obtain.

\section{Conclusion}
\label{sec:concl}
The understanding of the mechanism that leads to the evaporation of a (spherically symmetric) black hole is totally within the scope of semiclassical gravity. It turns out that the negative variation of the black hole mass is due to a negative ingoing flux on the horizon. Such a flux can be obtained by modelling matter outside and in the causal past of the horizon as a conformally coupled quantum scalar field. This model clearly shows that the key of evaporation is the quantum trace anomaly for suitable vacuum-like initial conditions in the past. Of course, to overcome the poor control on the state-dependent contribution to the stress-energy tensor in the dynamical case, some energy condition should be assumed, and here we made a choice which facilitates the analysis and is satisfied in known models of gravitational collapse. 
As an example, we have computed the rate of evaporation explicitly in the Vaidya spacetime  and shown that the Schwarzschild spacetime can never be in equilibrium with the quantum matter field outside the horizon, if the quantum matter is in a state which is the vacuum in the asymptotic past.

The results obtained here should be regarded as a first step towards a more complete analysis of black hole evaporation in semiclassical gravity. A full solution of the semiclassical Einstein equations showing black hole evaporation is still lacking. This solution is available only in the two-dimensional case \cite{Ashtekar:2010hx,Ashtekar:2010qz}. The difficulties in controlling the state-dependent contributions in the expectation values of the stress-energy tensor prevent the generalization to the four-dimensional case. In this regard, a study similar to the one in \cite{Meda:2020smb} for cosmological spacetimes would be desirable.
	
\bigskip

{\subsection*{Acknowledgements}
We thank two anonymous referees for helpful comments on an earlier version of this paper.}

\bigskip

\appendix

\section{{Appendix}}
\label{app}

\subsection{Proof of \eq \eqref{eq:dM-TUV-J}}
\label{app:proof-DeltaM}

Using that $T^{UV} = A^{-2} T_{UV}$ and $T^{UU} = A^{-2} T_{VV}$ we can relate the current $J_1$ defined in \eq \eqref{eq:current-r} to the variation of the mass \eqref{eq:dM} computed along the line enclosed between $(V_0,U_1), (V_1,U_2) \in \delta \cH$ in the $(V,U)$ plane (see Figure \ref{fig:D}). In $(V,U)$ coordinates, the derivatives of the Misner-Sharp energy \eqref{eq:Hawking-mass} read
\begin{subnumcases}{}
	\pa_V m  = \frac{4\pi r^2}{A} \at {T_{UV}} \pa_V r - {T_{VV}} \pa_U r \ct \label{eq:Einstein-V}, \\
	\pa_U m  = \frac{4\pi r^2}{A} \at {T_{UV}} \pa_U r - {T_{UU}} \pa_V r \ct \label{eq:Einstein-U}.
\end{subnumcases}
Evaluating \eqs \eqref{eq:Einstein-V} and \eqref{eq:Einstein-U} on $\pa_V r \eqH 0$, we obtain that
\begin{equation}
\label{eq:M-J1}
	\begin{aligned}
		{\Delta M} = 4\pi \int_{\delta \cH} r^2 (-\pa_U r) \at \frac{{T_{VV}}}{A} \d V - \frac{{T_{VU}}}{A} \ct \d U = 4 \pi \int_{\delta \cH} A r^2 \at J_1^V \d U - J_1^U \d V \ct.
	\end{aligned}
\end{equation}
\Eq \eqref{eq:dM-TUV-J} can be obtained by applying the divergence theorem (Stokes' theorem) to the current $J_1$ on the domain $D\times \mathds{S}^2$. Using the spherical symmetry to integrate out the angular variables $(\varphi,\theta)$, we obtain that 
\begin{align*}
	- \int_{D} \at \nabla \cdot J_1 \ct \d \cV_D &= \int_{\delta \cH } (J_1^V Ar^2 \d U - J_1^U Ar^2 \d V ) + \int_{\rho_0} J_1^V Ar^2 \d U + \int_{\delta_0} J_1^U Ar^2 \d V - \int_{\gamma} J_1^V Ar^2 \d U. 
\end{align*}
With the choice of the initial conditions \eqref{eq:initial-condition}, both the integrals along $\rho_0$ and $\delta_0$ vanish. By substitution of \eq \eqref{eq:M-J1} at the place of the integral over $\delta \cH$, we get
\begin{equation*}
	- \int_{D} \at \nabla \cdot J_1 \ct \d \cV_D = \frac{\Delta M}{4\pi} - \int_{U_0}^{U_2} Ar^2 J_1^V \d U.
\end{equation*}
Thus, \eq \eqref{eq:dM-TUV-J} is obtained by employing the definition $J_1 = (J_r-J_K)/2$, where $J_r$ and $J_K$ are given in \eqs \eqref{eq:current-r} and \eqref{eq:current-k}, and by using that $\nabla \cdot J_1 = \frac{1}{2} \nabla \cdot J_r$, since $\nabla \cdot J_K=0$ everywhere.

\subsection{Proof of Theorem \ref{theo:evap}}
\label{app:proof-evap}

The proof consists in applying the divergence theorem (Stokes' theorem) on the domain $D\times \mathds{S}^2$ to a quantum current $\tilde{J}$ depending on $\expvalom{T_{\mu\nu}}$, which is a weighted version of $J_1$ given in \cref{eq:current-r,eq:current-k}. The weight is given in terms of a strictly positive function $f(V,U)$ which will be fixed later. Let us define
\[
	\tilde{J}_\mu \doteq  \xi^\nu\expvalom{T_{\nu\mu}}, \qquad \xi^\nu =  f(V,U)(\partial_V)^\nu 
\]
and the weighted variation of the mass
\begin{equation}
\label{eq:dfM}
	\Delta_h M \doteq \int_{\delta \cH} \hspace{0em} h \d m
\end{equation}
with respect to the function
\begin{equation}
\label{eq:h}
	h(V,U) \doteq \frac{f(V,U) A(V,U)}{-\pa_U r(V,U)}.
\end{equation}
The divergence of $\tilde{J}$ is related to the variation of the weighted mass \eqref{eq:dfM} by the following equation:
\begin{equation}
\label{eq:app-formula1}
	\Delta_h M = - 4\pi\int_{D} \nabla \cdot \tilde{J} \d \cV_D - 4\pi \int_{U_0}^{U_2} \frac{\expvalom{T_{UV}}}{A} f A r^2 \d U,
\end{equation}
where
\[
	\Delta_h M = 4 \pi \int_{\delta \cH} \hspace{-0em} r^2 \at \tilde{J}_V  \d V - \tilde{J}_U \d U \ct.
\]
\Eq \eqref{eq:app-formula1} can be obtained similarly to what already done in the Appendix \ref{app:proof-DeltaM} for the current $J_1$, namely by applying the divergence theorem (Stokes' theorem) to the weighted current $\tilde{J}_\mu$ on the domain $D\times \mathds{S}^2$, under the assumptions of Section \ref{sec:variation-mass} for the domain $D$ and imposing the initial conditions \eqref{eq:initial-condition} on $\expvalom{T_{\mu\nu}}$.

\medskip

Using the conservation equation $\nabla^\mu \expvalom{T_{\mu\nu}} = 0$, the relation \eqref{eq:trace-relation}, and the semiclassical equations $A^{-1} \expvalom{T_{UV}} = {R_\theta}^\theta/(8\pi)$, $A^{-1}\expvalom{T_{VV}} = -{R_V}^U/(8\pi)$,
\[
	\nabla \cdot \tilde{J} = \frac{1}{8\pi} \aq - \at -{R_V}^U \pa_U f + {R_\theta}^\theta \pa_V f \ct + f \at -{R_\theta}^\theta \frac{\pa_V A}{A} + 2 {R_\theta}^\theta \frac{\pa_V r}{r} + 8\pi \expvalom{{T_{\rho}}^{\rho}} \frac{\pa_V r}{r} \ct \cq.
\]
Here, $\expvalom{{T_{\rho}}^{\rho}}$ is a geometric quantity given in terms of the trace anomaly in \eq \eqref{eq:T-trace}. There we can isolate a positive contribution after computing explicitly the product
\[
	{C_{\alpha\beta}}^{\gamma\delta}{C_{\gamma\delta}}^{\alpha\beta} = \at R + \frac{12 \kappa}{r}\ct^2 = 4\at {R_U}^U + {R_\theta}^\theta + \frac{6 \kappa}{r}\ct^2
\]
and the difference 
\[
	{R_\mu}^\nu {R_\nu}^\mu - \frac{1}{3} R^2 = 2{R_V}^U {R_U}^V + \frac{2}{3} \at ({R_U}^U)^2 + \at {R_\theta}^\theta \ct^2 \ct - \frac{8}{3} {R_\theta}^\theta {R_U}^U.
\]
Hence, the anomaly can be rewritten as
\[
	\expvalom{{T_{\rho}}^{\rho}} = \lambda \at 4\at {R_U}^U + {R_\theta}^\theta + \frac{6 \kappa}{r}\ct^2 + \frac{2}{3} \at ({R_U}^U)^2 + \at {R_\theta}^\theta \ct^2 \ct + 2{R_V}^U {R_U}^V -\frac{8}{3}{R_U}^U{R_\theta}^\theta \ct,
\]
where the first two terms are manifestly positive.
Plugging this expression inside \eq \eqref{eq:app-formula1} yields
\begin{equation}
\label{eq:int-fdM}
	\begin{aligned}
		\Delta_h M  &= -  4\pi\lambda \int_{\text{D}} \at 4 \at {R_U}^U + {R_\theta}^\theta + \frac{6 \kappa}{r}\ct^2 + \at ({R_U}^U)^2 + \at {R_\theta}^\theta \ct^2 \ct \ct \frac{\pa_V r}{r} f \d \cV_\text{D} \\
					&- 4\pi \int_{\text{D}} \at - \frac{\pa_V f}{f} + \pa_V \log(A^{-1}r^2) - \lambda \frac{64\pi}{3}{R_U}^U \frac{\pa_V r}{r} \ct {\frac{{R_\theta}^\theta}{8\pi}} f \d \cV_\text{D} \\
					&- 4\pi\int_{\text{D}} \at {+} \frac{\pa_U f}{f} + {8\pi} \lambda \frac{2\pa_V r}{r} {{R_U}^V} \ct {\frac{{R_V}^U}{8\pi}} f \d \cV_\text{D} - 4\pi \int_{U_0}^{U_2} \frac{\expvalom{T_{UV}}r^2}{A} f(V_1,U) A\d U. 
	\end{aligned}
\end{equation}
Since $D$ is a normal domain, e.g., with respect to the $V$-axis for any $\cF(V,U) \in \cC^\infty(\cM)$, it holds that
\begin{equation}
	\int_{D} \cF(V,U)  \d \cV_D =  \int_{V_0}^{V_1} \d V \int_{U_0}^{U_{\cH}(V)} \cF(V,U) A r^2 \d U ,
\end{equation}
where $U_{\cH}(V)$ is the solution of $2r(V,U) - m(V,U) = 0$.

Our aim is to prove now that $\Delta_hM$ is strictly negative using \eq \eqref{eq:int-fdM}. To this aim, we shall isolate all the integrals in the right-hand side which give a negative contribution to $\Delta_hM$, while we tame the effects of the other choosing carefully the geometric function $f$. Actually, we want to find a function $f(V,U) > 0$ such that all these unwanted terms in \eq \eqref{eq:int-fdM} vanish. Let $\beta(V,U)$ be any fixed primitive function of 
\[
	\pa_U \beta(V,U) = \frac{2}{r} \pa_V r {{R_U}^V},
\]
The $U$-derivative of $f$ is fixed in such a way to cancel the volume integral whose integrand is proportional to ${R_V}^U$, namely it must be a solution of the equation 
\[
	\at \frac{\pa_U f}{f} + {8\pi} \lambda \frac{2\pa_V r}{r} {{R_U}^V} \ct = 0.
\]
Hence, we get
\[
	f(V,U) = f_0(V) \exp({-8\pi}\lambda \beta(V,U)),
\]
where $f_0(V)$ is an integration constant which can be chosen consistently with the hypothesis stated in the Theorem. Plugging this function $f(V,U)$ in the contributions of \eq \eqref{eq:int-fdM} yields
\begin{equation}
\label{eq:int-fdM-2}
	\begin{aligned}
		\Delta_h M =& -4\pi\int_{V_0}^{V_1}  {f_0}(V)\; ( \gamma_3(V) + \gamma_1(V)) \d V + 4\pi\int_{V_0}^{V_1}  {\pa_V f_0}(V)\; \gamma_2(V) \d V \\
		&- 4\pi \int_{U_0}^{U_2} \frac{\expvalom{T_{UV}}r^2}{A} f(V_1,U) A\d U,
	\end{aligned}
\end{equation}
where
\begin{align*}
	\gamma_1(V) &= \frac{1}{8\pi} \int_{U_0}^{U_{\cH}} {R_\theta}^\theta \at \pa_V \log(A^{-1}r^2) + 8\pi \lambda \at -\frac{8}{3}{R_U}^U \frac{\pa_V r}{r} + \pa_V \beta(V,U) \ct\ct
	\frac{f}{f_0} A r^2  \d \tilde{U}, \\
	\gamma_2(V) &= \frac{1}{8\pi} \int_{U_0}^{U_{\cH}} {R_\theta}^\theta \frac{f}{f_0} A r^2 \d \tilde{U}, 
\end{align*}
and $\gamma_3$ is  
\[
	\gamma_3(V)  =  \lambda\int_{U_0}^{U_{\cH}(V)} \at 4 \at {R_U}^U + {R_\theta}^\theta + \frac{6 \kappa}{r}\ct^2 + \frac{2}{3} \at ({R_U}^U)^2 + ({R_\theta}^\theta)^2 \ct \ct \frac{\pa_V r}{r} \frac{f}{f_0} A r^2  \d \tilde{U}.
\]

To prove that $\Delta_hM$ is strictly negative, the three contributions given by the three integrals on the right hand side of \eq \eqref{eq:int-fdM-2} are analyzed separately. Since $\partial_V r>0$ outside the horizon, $\gamma_3$ can be controlled as follows: 
\[
	\gamma_3(V) \geq \frac{144}{13} \lambda\int_{U_0}^{U_{\cH}(V)} \kappa^2  \frac{\pa_V r}{r} \frac{f}{f_0} A   \d U.
\]
From the behaviour on the apparent horizon of the expansion parameters of the ingoing and outgoing radial null geodesics given in \eq \eqref{eq:FOTH}, and according to the definition of $\kappa$ given in \eqref{eq:kodama-box}, $\kappa$ is strictly positive on the apparent horizon, and by continuity it stays positive also near the horizon. Thus, $\gamma_3(V)$ is strictly positive for $V \in [V_0,V_1]$.

Moreover, the initial conditions given on the hypersurface $\rho_0$ which is part of $\pa D$ imply that ${R_\theta}^\theta = A^{-1} \expvalom{T_{UV}} = 0 $ on $\rho_0$, and hence $\gamma_1(V_0) = \gamma_2(V_0) = 0$. Then, $\gamma_1+\gamma_3$ is strictly positive for $V=V_0$, and by continuity it stays strictly positive also for $V$ near $V_0$. Therefore, we may find a constant $\delta>0$ such that $f_0(\gamma_3+\gamma_1)$ is strictly positive on $[V_0,V_0+\delta]$. If $k$ in $f_0$ is sufficiently large, the integral of $f_0(\gamma_3+\gamma_1)$ over $[V_0,V_1]$ is dominated by the contribution on $[V_0,V_0+\delta]$. Hence, the first contribution in the right-hand side of \eq \eqref{eq:int-fdM-2} containing $\gamma_3+\gamma_1$ is strictly negative for that choice of $k$. 

Furthermore, the term containing $\gamma_2$ in $\Delta_hM$ in \eq \eqref{eq:int-fdM-2} is negative or null, because $\pa_V f_0 < 0$ on $[V_0,V_1]$, ${R_\theta}^\theta = A^{-1} \expvalom{T_{UV}} = 0$, and $\gamma_2(V) \geq 0$ for all $V\in [V_0,V_1]$, according to the hypothesis stated in \eq \eqref{eq:AEC}. 

Finally, the condition \eqref{eq:AEC} also implies that the last integral appearing in $\Delta_h M$ in \eq \eqref{eq:int-fdM-2}, which is computed for $V\in V_1$ and supported in $[U_0,U_2]$, gives a negative (or null) contribution to $\Delta_h M$.

Taking into account all this and with this choice of $f(V,U)$, $h(V,U)$ given in \eq \eqref{eq:h} is also positive and smooth. Hence, it is bounded from below in $\delta \cH$, so $0 > \Delta_h M \geq C{\Delta M}$, where $C > 0$, and the proof of the Theorem holds.

\newcommand{\SortNoop}[1]{}

\end{document}